\newtheorem{assumption}{Assumption}
\newtheorem{definition}{Definition}
\newtheorem{corollary}{Corollary}
\begin{document}

\begin{frontmatter}
\title{
Bayesian Nonparametric Causal Inference for High-Dimensional Nutritional Data via Factor-Based Exposure Mapping
}
\runtitle{Bayesian causal inference with exposure mapping}

\begin{aug}

\author{\fnms{Dafne} \snm{Zorzetto}\thanksref{m1,m4}\ead[label=e1]{}},
\author{\fnms{Zizhao} \snm{Xie}\thanksref{m1, m4}\ead[label=e2]{}},
\author{\fnms{Julian} \snm{Stamp}\thanksref{m2,m3} \ead[label=e3]{}},
\\
\author{\fnms{Arman} \snm{Oganisian}\thanksref{m1}\ead[label=e4]{}} \and
\author{\fnms{Roberta} \snm{De Vito}\thanksref{m1,m2,m5}\ead[label=e5]{roberta\_devito@brown.edu}}

\runauthor{Zorzetto et al.}

\affiliation{\thanksmark{m1} Department of Biostatistics, Brown University\\ \thanksmark{m2} Data Science Institute, Brown University  \\ \thanksmark{m3} Center for Computational Molecular Biology, Brown University \\ \thanksmark{m5} Department of Statistical Science, La Sapienza University of Rome\\
\thanksmark{m4} Co-first authors.}

\end{aug}

\begin{abstract}
\quad
Diet plays a crucial role in health, and understanding the causal effects of dietary patterns is essential for informing public health policy and personalized nutrition strategies. 
However, causal inference in nutritional epidemiology faces several challenges: (i) high-dimensional and correlated food/nutrient intake data induce massive treatment levels; (ii) nutritional studies are interested in latent dietary patterns rather than single food items; and (iii) the goal is to estimate heterogeneous causal effects of these dietary patterns on health outcomes. 
We address these challenges by introducing a sophisticated exposure mapping framework that reduces the high-dimensional treatment space via factor analysis and enables the identification of dietary patterns. Also, we extend the Bayesian Causal Forest to accommodate three ordered levels of dietary exposure, better capturing the complex structure of nutritional data, and enabling estimation of heterogeneous causal effects.

We evaluate the proposed method through extensive simulations and apply it to a multi-center epidemiological study of Hispanic/Latino adults residing in the US. Using high-dimensional dietary data, we identify six dietary patterns and estimate their causal link with two key health risk factors: body mass index and fasting insulin levels. Our findings suggest that higher consumption of plant lipid–antioxidant, plant-based, animal protein, and dairy product patterns is associated with reduced risk.
\end{abstract}

\begin{keyword}
\kwd{Bayesian Causal Forest}
\kwd{Bayesian nonparametric}
\kwd{Dietary pattern}
\kwd{Dietary causal effect}
\kwd{Factor model}
\kwd{Exposure mapping}
\end{keyword}

\end{frontmatter}

\section{Introduction}
Diet plays a crucial role in health outcomes, influencing the risk of chronic diseases such as cardiovascular disease, obesity, and diabetes \citep{kromhout2001diet, gbd2015global, maldonado2022posteriori, de2022shared, de2023multi}. Understanding the causal effects of dietary interventions is essential for developing effective public health policies and personalized nutrition strategies \citep{pearson2017reducing}. Although the relationship between diet and health outcomes has been extensively investigated, causal questions in nutritional epidemiology remain insufficiently addressed, in part because dietary exposures are complex and difficult to represent within standard causal inference settings.

In this paper, we bridge nutritional epidemiology and causal inference  by formalizing a causal framework tailored to nutritional epidemiological data. Several challenges arise. First, dietary intakes are typically collected as high-dimensional matrices of food or nutrient consumption, with strong correlations among items, requiring principled dimensionality reduction. Second, nutritional epidemiology commonly focuses on the identification of latent dietary patterns  that summarize this complex information and then categorize individuals into three (or more) ordered groups, i.e., low, medium, and high consumption. Third, we hypothesize that the causal relationship between dietary patterns and health outcomes exhibits heterogeneity driven by individual consumer characteristics.

Within the potential outcome framework \citep{rubin1974estimating}, many flexible methods have been developed to estimate heterogeneous causal effects \cite[see][for a review]{dominici2020controlled}. In particular, 
Bayesian nonparametric (BNP) approaches have emerged as powerful tools for causal inference, offering flexibility in modeling treatment effect heterogeneity while incorporating uncertainty quantification \citet{linero2021and}. Among these, many causal methods for heterogeneous causal effects leverage infinite mixture models \citet{roy2018bayesian, oganisian2021bayesian, zorzetto2024confounder}, Gaussian processes \citet{alaa2017bayesian}, and tree-based models \citep[BART - ][]{bargagli2020causal,athey2019generalized,deshpande2020vcbart}.
In particular, the causal formulation of Bayesian Additive Regression Trees \citep[BART - ][]{Chipman_2010} introduced by \citet{hill2011bayesian} and its extension, Bayesian Causal Forests \citep[BCF - ][]{hahn2020bayesian}, have shown strong performance due to their ability to capture complex relationships between covariates and treatment effects. BCF further improves  causal effect estimation by incorporating propensity scores and regularization techniques to mitigate confounding. Despite these advances, existing BCF models mainly target binary treatments and, more recently, discrete treatment \citep{mcjames2025bayesian}, whereas dietary interventions often involve multiple ordered treatment levels.

Despite a rich literature offering a wide range of flexible models, additional methodological development is needed to accommodate a high-dimensional treatment variable and to define an approach that reduces the dimensionality without sacrificing relevant information. Exposure mapping, introduced by \citet{savje2024causal}, formalizes the relationship between an observed high-dimensional treatment exposure and a lower-dimensional exposure variable. Related techniques have been used to address treatment spillovers and high-dimensional network confounding arising from complex unit dependencies in disease transmission networks \citep{sun2025difference}, as well as to model continuous exposures evolving over time \citep{hettinger2025multiply}. However, the highly correlated nature of nutritional data requires a more advanced exposure mapping that explicitly captures the shared latent variation across foods and nutrients.

In this paper, we make three primary methodological contributions. First, we formulate a general causal framework under exposure mapping, specifying the key properties of the mapping function, the required causal assumptions, and the estimands of interest. Second, we introduce a novel exposure mapping based on factor analysis to reduce the dimensionality of high-dimensional and highly correlated nutritional data. Third, we extend the Bayesian Causal Forest framework to accommodate three ordered treatment levels; we refer to this extension as BCF3L.

In particular, Factor analysis is widely used for dimensionality reduction, and a substantial Bayesian literature demonstrates its effectiveness to identify dietary patterns in nutritional epidemiology \citep[e.g.,][]{varraso2012assessment, de2019shared, de2023multi, huang2024sparse}. However, these approaches have largely been developed outside a causal inference framework. While Bayesian factor models have been proposed for causal inference in other applied contexts \citep{samartsidis2019assessing, samartsidis2020bayesian, zorzetto2025multivariate}, they have not been used as exposure mappings to reduce a high-dimensional treatment space in nutrition.

Our approach uses a Bayesian factor model to uncover the latent structure of observed dietary intake (the high-dimensional treatment), yielding lower-dimensional latent dietary patterns that serve as exposure variables.  Consistent with standard practice in nutritional epidemiology, these dietary patterns are discretized into ordered exposure levels, i.e., low, medium, and high consumption, such that the proposed BCF3L model can estimate heterogeneous causal effects on health outcomes for medium versus low and high versus medium exposure levels. This enables flexible, nonlinear estimation of diet–health relationships within a nonparametric Bayesian framework.

Moreover, BCF3L can be applied more broadly beyond high-dimensional settings, providing a general extension of the standard BCF framework to cases involving three ordered treatment levels.

Through extensive simulation studies, we demonstrate that our BCF3L model provides robust estimates of heterogeneous causal effects and integrates naturally with factor-analysis-based exposure mapping in high-dimensional settings. Across a range of scenarios, BCF3L outperforms both Bayesian linear regression, which imposes restrictive parametric assumptions on the treatment effect, and fully nonparametric alternatives such as BART.

We apply our method to the  Hispanic Community Health Study/Study of Latinos (HCHS/SOL), a multi-center epidemiological study designed to investigate  key factors influencing the health of Hispanic/Latino populations. We identify distinct dietary patterns and estimate their causal effects on two key cardiometabolic risk factors: body mass index (BMI) and fasting insulin levels.

We have developed the GitHub package \texttt{BCF3L} to estimate heterogeneous causal effects under three ordered treatment levels. The code that integrates exposure mapping with the Bayesian factor model and the BCF3L estimation is available on our GitHub repository: \href{https://github.com/zizhaoxie0327/Causal_nutrition}{\texttt{zizhaoxie0327/Causal\_nutrition}}.

This paper is organized as follows. Section 2 introduces the HCHS/SOL dataset and the variables included in our analysis. Section 3 presents the methodological contribution, including (i) the casual framework for exposure mapping throught the latent factors identified with the Bayesian factor model and (ii) the definition of Bayesian Causal Forests for three ordered treatment levels for a general case. Section 4 evaluates the performance of our model across various simulation settings, comparing it with the Bayesian Linear Regression model with horseshoe prior and Bayesian Additive Regression Trees. In Section 5, we estimate the causal effects of different dietary patterns on two key health risk factors: body mass index (BMI) and fasting insulin levels.


\section{HCHS/SOL Study}
\label{sub:HCHS}
\subsection{The study}
The Hispanic Community Health Study/Study of Latinos (HCHS/SOL) is the largest multi-site cohort study of Hispanics/Latinos in the United States. 
One of the key aim of the study is to provide previously unavailable information on the health status and disease burden of U.S. Hispanics/Latinos and to investigate associations between risk factors and cardiovascular disease over follow-up. The HCHS/SOL study includes 16,415 adults aged 18-74 years recruited from four different sites (i.e., Bronx, Chicago, Miami, San Diego) representing seven Hispanic/Latino backgrounds (i.e., Cuban, Dominican Republic, Mexican, Puerto Rican, Central, South American, and others). Baseline data was collected between 2008 and 2011 using a stratified two-stage probability sampling design, described in detail by  \citet{lavange2010sample}.

In this manuscript, we focus on dietary intake as a potential risk factor and assess its causal effect on health outcomes: body Mass Index(BMI)(kg/m$^2$) and fasting insulin levels (calibrated and expressed in mU/L). Fasting insulin, measured  from fasting blood samples, is a key biomarker to identify people at elevated cardiometabolic risk \citep{maldonado2022posteriori}. The BMI  was measured at the baseline visit. Fasting insulin levels were assessed at two visits; in this analysis, we use the baseline measurement, obtained from venous blood collected after a fasting period of at least 8 hours prior to the visit.

Nutritional data was collected through two 24-hour dietary recalls: the first was conducted in person at baseline and the second by telephone within 30 days after the initial visit. Interviews were conducted in the participant’s preferred language (English for 20\% and Spanish for 80\% of participants) by trained bilingual interviewers.  
Dietary intake data include 182 nutritional components such as nutrients, nutrient ratios, and food group servings. Nearly all participants (99\%) provided at least one reliable dietary recall.

\subsection{Selection of subjects and data preprocessing}

From the 16,415 baseline participants, we excluded individuals who self-identified with ethnic backgrounds outside of the seven predefined groups, had unreliable recall data (e.g., vitamin E intake < 0), or presented implausible energy intake estimates (<1st or >99th percentiles). Additionally, participants with missing data on ethnic background or annual household income were excluded. After applying these criteria, the final sample consisted of 9,298 participants. 

Table~\ref{tab:BMI_summary} summarizes socio-demographic and health-related characteristics stratified by BMI category.  Thirteen potential confounders are reported across four BMI categories: Underweight (i.e., $< 18.5$), Normal weight (i.e., $\geq 18.5$ and $< 25$), Overweight (i.e., $\geq 25$ and $< 30$), and Obese (i.e., $\geq 30$). The distribution of BMI categories varied across the four study sites, with the Obese group comprising the largest proportion of participants and the Underweight group the smallest. Overall, the majority of the participants were female, married, and reported annual household income below $30,000$. Obese individuals tended to be older, whereas underweight individuals were generally younger. Health-related characteristics indicated higher use of diabetes and hypertension medications among obese participants, while underweight individuals exhibited the highest energy intake and the lowest levels of physical activity.

Table~\ref{tab:Insulin_summary} summarizes socio-demographic and health-related characteristics stratified by fasting insulin category (i.e., Below 5, Between 5 and 15, Between 15 and 25, and Above 25 mU/L). Most participants had fasting insulin levels between 5 and 15 mU/L. Participants with higher insulin levels were generally older, had a higher prevalence of chronic conditions, and reported lower physical activity. 
Among participants with fasting insulin levels above 25 mU/L, those individuals of Mexican background comprised the largest proportion.

\begin{table*}[!h]
  \caption{\it Summary of socio-demographic and health-related characteristics across BMI (Body Mass Index, kg/m$^2$) categories.}
  \label{tab:BMI_summary}
  \tiny
  \tabcolsep=0pt
  \begin{tabular*}{\textwidth}{@{\extracolsep{\fill}}l c c c c@{\extracolsep{\fill}}}
    & \textbf{Underweight} & \textbf{Normal} & \textbf{Overweight} & \textbf{Obese} \\
    & ($N=39$) & ($N=1,788$) & ($N=3,618$) & ($N=3,853$) \\
    \hline \hline
    \textbf{Energy Intake (kcal)} & & & & \\
    $<$ 1300 kcal & 6 (15\%) & 352 (20\%) & 868 (24\%) & 1,056 (27\%) \\
    $\geq$ 1300 and $<$ 1700 kcal & 6 (15\%) & 429 (24\%) & 911 (25\%) & 948 (25\%) \\
    $\geq$ 1700 and $<$ 2300 kcal & 10 (26\%) & 543 (30\%) & 980 (27\%) & 1,071 (28\%) \\
    $\geq$ 2300 kcal & 17 (44\%) & 464 (26\%) & 859 (24\%) & 778 (20\%) \\
    \hline
    \textbf{Antidiabetics} & & & & \\
    No & 39 (100\%) & 1,669 (93\%) & 3,282 (91\%) & 3,223 (84\%) \\
    Yes & 0 (0\%) & 119 (6.7\%) & 336 (9.3\%) & 630 (16\%) \\
    \hline
    \textbf{Antihypertensives} & & & & \\
    No & 39 (100\%) & 1,624 (91\%) & 3,071 (85\%) & 2,929 (76\%) \\
    Yes & 0 (0\%) & 164 (9.2\%) & 547 (15\%) & 924 (24\%) \\
    \hline
    \textbf{Physical Activity} & & & & \\
    No & 14 (36\%) & 944 (53\%) & 2,143 (59\%) & 2,643 (69\%) \\
    Yes & 25 (64\%) & 844 (47\%) & 1,475 (41\%) & 1,210 (31\%) \\
    \hline
    \textbf{Well-being Score} & & & & \\
    $<$ 10 & 31 (79\%) & 1,381 (77\%) & 2,821 (78\%) & 2,804 (73\%) \\
    $\geq$ 10 and $<$ 20 & 7 (18\%) & 342 (19\%) & 664 (18\%) & 867 (23\%) \\
    $\geq$ 20 & 1 (2.6\%) & 65 (3.6\%) & 133 (3.7\%) & 182 (4.7\%) \\
    \hline
    \textbf{Employment} & & & & \\
    Retired & 1 (2.6\%) & 138 (7.7\%) & 370 (10\%) & 428 (11\%) \\
    Not retired, not currently employed & 22 (56\%) & 646 (36\%) & 1,215 (34\%) & 1,477 (38\%) \\
    Employed part-time ($\leq$ 35 hr) & 7 (18\%) & 353 (20\%) & 642 (18\%) & 601 (16\%) \\
    Employed full-time ($>$ 35 hr) & 9 (23\%) & 651 (36\%) & 1,391 (38\%) & 1,347 (35\%) \\
    \hline
    \textbf{Years in US} & & & & \\
    $<$ 10 years & 13 (33\%) & 494 (28\%) & 867 (24\%) & 710 (18\%) \\
    $\geq$ 10 and $<$ 20 years & 4 (10\%) & 408 (23\%) & 861 (24\%) & 845 (22\%) \\
    $\geq$ 20 years & 4 (10\%) & 561 (31\%) & 1,431 (40\%) & 1,651 (43\%) \\
    US Born & 18 (46\%) & 325 (18\%) & 459 (13\%) & 647 (17\%) \\
    \hline
    \textbf{Marital Status} & & & & \\
    Single & 30 (77\%) & 591 (33\%) & 773 (21\%) & 896 (23\%) \\
    Married & 8 (21\%) & 845 (47\%) & 2,090 (58\%) & 2,157 (56\%) \\
    Separated & 1 (2.6\%) & 352 (20\%) & 755 (21\%) & 800 (21\%) \\
    \hline
    \textbf{Yearly Household Income} & & & & \\
    $<$ 30k & 25 (64\%) & 1,198 (67\%) & 2,439 (67\%) & 2,635 (68\%) \\
    $\geq$ 30k & 14 (36\%) & 590 (33\%) & 1,179 (33\%) & 1,218 (32\%) \\
    \hline
    \textbf{Education Level} & & & & \\
    $<$ High School & 8 (21\%) & 567 (32\%) & 1,362 (38\%) & 1,500 (39\%) \\
    $=$ High School & 14 (36\%) & 479 (27\%) & 926 (26\%) & 959 (25\%) \\
    $>$ High School & 17 (44\%) & 742 (41\%) & 1,330 (37\%) & 1,394 (36\%) \\
    \hline
    \textbf{Gender} & & & & \\
    Female & 21 (54\%) & 1,081 (60\%) & 1,946 (54\%) & 2,430 (63\%) \\
    Male & 18 (46\%) & 707 (40\%) & 1,672 (46\%) & 1,423 (37\%) \\
    \hline
    \textbf{Participant’s Field Center} & & & & \\
    Bronx & 16 (41\%) & 403 (23\%) & 877 (24\%) & 1,041 (27\%) \\
    Chicago & 7 (18\%) & 445 (25\%) & 939 (26\%) & 1,020 (26\%) \\
    Miami & 6 (15\%) & 498 (28\%) & 893 (25\%) & 877 (23\%) \\
    San Diego & 10 (26\%) & 442 (25\%) & 909 (25\%) & 915 (24\%) \\
    \hline
    \textbf{Hispanic/Latino Background} & & & & \\
    Dominican & 7 (18\%) & 169 (9.5\%) & 360 (10.0\%) & 370 (9.6\%) \\
    Central American & 5 (13\%) & 192 (11\%) & 378 (10\%) & 377 (9.8\%) \\
    Cuban & 3 (7.7\%) & 276 (15\%) & 552 (15\%) & 538 (14\%) \\
    Mexican & 15 (38\%) & 694 (39\%) & 1,508 (42\%) & 1,555 (40\%) \\
    Puerto Rican & 8 (21\%) & 299 (17\%) & 544 (15\%) & 776 (20\%) \\
    South American & 1 (2.6\%) & 158 (8.8\%) & 276 (7.6\%) & 237 (6.2\%) \\
  \end{tabular*}
\end{table*}

Following \citet{de2022shared}, we selected 53 nutrients that best represent overall nutrition for Hispanics/Latinos. To improve identification of dietary patterns associated with cardiovascular disease, we expanded the lipid profile \citep{lichtenstein2003dietary, islam2019trans}. In particular, we constructed four additional variables: 1- short-chain saturated fatty acids (SCSFA; e.g., butyric acid), 2- medium-chain saturated fatty acids (MCSFA; sum of caproic acid, caprylic acid, capric acid, and lauric acid), 3- long-chain saturated fatty acids (LCSFA; sum of myristic acid, palmitic acid, margaric acid, stearic acid, and arachidic acid), 4- long-chain monounsaturated fatty acids (LCSFA; sum of myristoleic acid, palmitoleic acid, oleic acid, and gadoleic acid).

For each participant, nutrient intake variables were derived by averaging the two available dietary recalls for each participant.


\begin{table*}[!h]
  \caption{\it Summary of socio-demographic and health-related characteristics across insulin categories.}
  \label{tab:Insulin_summary}
  \tiny
  \tabcolsep=0pt
  \begin{tabular*}{\textwidth}{@{\extracolsep{\fill}}l c c c c@{\extracolsep{\fill}}}
    & \textbf{Below 5} & \textbf{Between 5 and 15} & \textbf{Between 15 and 25} & \textbf{Above 25} \\
    & ($N=1,175$) & ($N=5,429$) & ($N=1,817$) & ($N=877$) \\
    \hline \hline
    \textbf{Energy Intake (kcal)} & & & & \\
    $<$ 1300 kcal & 266 (23\%) & 1,351 (25\%) & 441 (24\%) & 224 (26\%) \\
    $\geq$ 1300 and $<$ 1700 kcal & 266 (23\%) & 1,384 (25\%) & 449 (25\%) & 195 (22\%) \\
    $\geq$ 1700 and $<$ 2300 kcal & 364 (31\%) & 1,463 (27\%) & 545 (30\%) & 232 (26\%) \\
    $\geq$ 2300 kcal & 279 (24\%) & 1,231 (23\%) & 382 (21\%) & 226 (26\%) \\
    \hline
    \textbf{Antidiabetics} & & & & \\
    No & 1,068 (91\%) & 4,931 (91\%) & 1,532 (84\%) & 682 (78\%) \\
    Yes & 107 (9.1\%) & 498 (9.2\%) & 285 (16\%) & 195 (22\%) \\
    \hline
    \textbf{Antihypertensives} & & & & \\
    No & 1,056 (90\%) & 4,598 (85\%) & 1,394 (77\%) & 615 (70\%) \\
    Yes & 119 (10\%) & 831 (15\%) & 423 (23\%) & 262 (30\%) \\
    \hline
    \textbf{Physical Activity} & & & & \\
    No & 555 (47\%) & 3,259 (60\%) & 1,276 (70\%) & 654 (75\%) \\
    Yes & 620 (53\%) & 2,170 (40\%) & 541 (30\%) & 223 (25\%) \\
    \hline
    \textbf{Well-being Score} & & & & \\
    $<$ 10 & 884 (75\%) & 4,171 (77\%) & 1,348 (74\%) & 634 (72\%) \\
    $\geq$ 10 and $<$ 20 & 245 (21\%) & 1,044 (19\%) & 385 (21\%) & 206 (23\%) \\
    $\geq$ 20 & 46 (3.9\%) & 214 (3.9\%) & 84 (4.6\%) & 37 (4.2\%) \\
    \hline
    \textbf{Employment} & & & & \\
    Retired & 88 (7.5\%) & 541 (10\%) & 209 (12\%) & 99 (11\%) \\
    Not retired and not currently employed & 375 (32\%) & 1,895 (35\%) & 703 (39\%) & 387 (44\%) \\
    Employed part-time ($\leq$ 35 hr) & 204 (17\%) & 994 (18\%) & 283 (16\%) & 122 (14\%) \\
    Employed full-time ($>$ 35 hr) & 508 (43\%) & 1,999 (37\%) & 622 (34\%) & 269 (31\%) \\
    \hline
    \textbf{Years in US} & & & & \\
    $<$ 10 years & 263 (22\%) & 1,284 (24\%) & 363 (20\%) & 174 (20\%) \\
    $\geq$ 10 and $<$ 20 years & 264 (22\%) & 1,251 (23\%) & 423 (23\%) & 180 (21\%) \\
    $\geq$ 20 years & 447 (38\%) & 2,101 (39\%) & 726 (40\%) & 373 (43\%) \\
    US Born & 201 (17\%) & 793 (15\%) & 305 (17\%) & 150 (17\%) \\
    \hline
    \textbf{Marital Status} & & & & \\
    Single & 327 (28\%) & 1,339 (25\%) & 418 (23\%) & 206 (23\%) \\
    Married & 599 (51\%) & 2,976 (55\%) & 1,023 (56\%) & 502 (57\%) \\
    Separated & 249 (21\%) & 1,114 (21\%) & 376 (21\%) & 169 (19\%) \\
    \hline
    \textbf{Yearly Household Income} & & & & \\
    $<$ 30k & 764 (65\%) & 3,648 (67\%) & 1,255 (69\%) & 630 (72\%) \\
    $\geq$ 30k & 411 (35\%) & 1,781 (33\%) & 562 (31\%) & 247 (28\%) \\
    \hline
    \textbf{Education Level} & & & & \\
    $<$ High School & 425 (36\%) & 1,980 (36\%) & 671 (37\%) & 361 (41\%) \\
    $=$ High School & 316 (27\%) & 1,374 (25\%) & 471 (26\%) & 217 (25\%) \\
    $>$ High School & 434 (37\%) & 2,075 (38\%) & 675 (37\%) & 299 (34\%) \\
    \hline
    \textbf{Gender} & & & & \\
    Female & 608 (52\%) & 3,233 (60\%) & 1,127 (62\%) & 510 (58\%) \\
    Male & 567 (48\%) & 2,196 (40\%) & 690 (38\%) & 367 (42\%) \\
    \hline
    \textbf{Participant’s Field Center} & & & & \\
    Bronx & 339 (29\%) & 1,393 (26\%) & 412 (23\%) & 193 (22\%) \\
    Chicago & 292 (25\%) & 1,389 (26\%) & 474 (26\%) & 256 (29\%) \\
    Miami & 220 (19\%) & 1,319 (24\%) & 500 (28\%) & 235 (27\%) \\
    San Diego & 324 (28\%) & 1,328 (24\%) & 431 (24\%) & 193 (22\%) \\
    \hline
    \textbf{Hispanic/Latino Background} & & & & \\
    Dominican & 135 (11\%) & 590 (11\%) & 137 (7.5\%) & 44 (5.0\%) \\
    Central American & 118 (10\%) & 517 (9.5\%) & 207 (11\%) & 110 (13\%) \\
    Cuban & 121 (10\%) & 818 (15\%) & 287 (16\%) & 143 (16\%) \\
    Mexican & 487 (41\%) & 2,199 (41\%) & 743 (41\%) & 343 (39\%) \\
    Puerto Rican & 219 (19\%) & 883 (16\%) & 330 (18\%) & 195 (22\%) \\
    South American & 95 (8.1\%) & 422 (7.8\%) & 113 (6.2\%) & 42 (4.8\%) \\
  \end{tabular*}
\end{table*}


\section{Latent treatment patterns}
\label{sec:set_up}

\subsection{Causal setup}
\label{subsec:causal_setup}
Let consider a sample of $n$ units indexed by $i \in \{1, \dots, n\}$. Let \(\mathbf{Z_i} = (Z_{i1}, \dots, Z_{ip}) \in (\mathbb{R}_0^+)^{p} \) denote the vector of $p$ observable variables representing the consumption of specific nutrient items for each individual $i$, and let ${\bf Z}\in (\mathbb{R}_0^+)^{p\times n}$ denote the collection across all $n$ units. Therefore, each unit $i$ has an infinite possible treatment assignment $\mathcal{Z} = (\mathbb{R}_0^+)^{p\times n}$, which indicates the specific dietary intake as a combination of $p$ nutrients. In our nutritional application, we have $p = 53$ nutrients and $n=9,298$ individuals.

Under Rubin Causal Model \citep{rubin1974estimating}, the potential outcome is defined as $Y_i({\bf Z}) \in \mathbb{R}$ for each unit $i$. Because $\mathcal{Z}$ is high-dimensional and continuous, the collection of potential outcomes is extremely complex. Therefore, following \citet{savje2024causal}, we introduce a function, or exposure mapping, that reduces the treatment infinite space by mapping high-dimensional intake vectors to a lower-dimensional, discrete exposure. In the following sections \ref{subsec:factors} and \ref{subsec:treatment_def}, we describe the two step procedure, respectively, factor identification and tertile reduction, to define the latent treatment ${\bf T}_i \in \mathcal{T} = \{0,1,2\}^J$. Specifically, for each unit $i$ and with $J<p$, we define 
$${\bf T}_i = t_i({\bf Z}): (\mathbb{R}_0^+)^{p\times n} \xrightarrow[]{} \{0,1,2\}^J$$. 

Hereafter, we use the term \emph{exposure} in the general setting following \citet{savje2024causal}, and \emph{dietary patterns} when referring to nutritional application.

We also observe subject-specific covariates $\mathbf{X}_i \in \mathcal{X}$, a $d$-dimension vector of background characteristics, which may be either continuous or discrete, referred to as pretreatment variables.

Given the exposure mapping \citep[following] []{savje2024causal}, we invoke the following definitions and assumptions.

\begin{definition}[Exposure mapping characteristics] \label{assump:mapping}
The exposure mapping $t(\cdot): \mathcal{Z} \xrightarrow[]{}\mathcal{T}$, with $|\mathcal{T}| < |\mathcal{Z}|$, is a surjective many-to-one function such that
\[
\forall \mathbf{t} \in \mathcal{T}, \; \exists \mathbf{z} \in \mathcal{Z}
\text{ such that } t(\mathbf{z}) = \mathbf{t},
\]
and
\[
\exists \, \mathbf{z}_1 \neq \mathbf{z}_2 \text{ such that }
t(\mathbf{z}_1) = t(\mathbf{z}_2).
\]
\end{definition}

In the nutrition application, this definition implies that multiple combinations of $p$ nutrients (${\bf z}, {\bf z}' \in \mathcal{Z}$) can map to the same dietary pattern ${\bf t}$, i.e.,   ${\bf t}= t({\bf z})= t({\bf z}')$.

\begin{definition}[Potential outcome under exposure mapping]\label{def:outcome_exp}
Given the Definition~\ref{assump:mapping} of the exposure mapping, the potential outcome under different treatment levels ${\bf z},{\bf z}'\in \mathcal{Z}$  for each unit $i \in \{1, \dots, n\}$ is defined as follows
\[
Y_i({\bf z})=Y_i({\bf z}') \text{ whenever } t_i({\bf z})=t_i({\bf z}').
\]
Equivalently, there exists a function that defines $Y_i({\bf z})=Y_i(t_i({\bf z}))$.
\end{definition}

Intuitively, different nutrients' combinations (i.e., ${\bf z},{\bf z}'$), that define the same dietary pattern $t$ (where $t=t({\bf z})=t({\bf z}')$), induce the same potential health outcome.

\begin{assumption}[No interference under exposure mapping] \label{assump:sutva_1}
Under Definitions~\ref{assump:mapping}-\ref{def:outcome_exp}, for each unit $i\in \{1, \dots, n\}$,  there is no interference between units exposure mapping, such that
\begin{equation*}
     Y_i(\mathbf{Z}) = Y_i(t_i({\bf Z}))=Y_i({\bf T}_i).
\end{equation*}
\end{assumption}

This assumption ensures that unit $i$'s potential outcome depends only on the exposure mapping of this unit. I.e., the potential health risk of each patient in the nutritional study depends on the personal dietary patterns.

\begin{assumption}[Consistency] \label{assump:cons}
For each unit $i\in \{1, \dots, n\}$, the observed outcome is the the potential outcome under the observed treatment assignment, such that 
\begin{equation*}
     Y_i = Y_i(\mathbf{z}).
\end{equation*}
Invoking Assumption~\ref{assump:sutva_1}, this can be written as
\begin{equation*}
     Y_i = Y_i(\mathbf{t}_i).
\end{equation*}
\end{assumption}

Therefore, in the nutritional study, the observed health risk of each patient $i$ is the potential health risk under the observed combinations of its nutrients intake. That is equivalent to the observed health risk under its personal dietary patterns.

\begin{assumption}[Strong ignorability] \label{assump:ignor}
Treatment assignment is random in each group of units
characterized by the covariates ${\bf X}$, such that
\begin{equation}
     \{Y_i({\bf z}): {\bf z} \in \mathcal{Z}\} \perp {\bf Z} \mid {\bf X}_i, \mbox{ for each unit  } i \in \{1, \dots, n\}.
     \label{eq:ignorability}
\end{equation}
\end{assumption}

\begin{corollary}[Strong ignorability under exposure mapping]\label{corol:ignor}
   Given the exposure mapping ${\bf T}_i = t_i({\bf Z}): (\mathbb{R}_0^+)^{p\times n} \xrightarrow[]{} \{0,1,2\}^J$ in Definition~\ref{assump:mapping} and invoking  Assumption~\ref{assump:sutva_1}, we can rewrite Eq~\eqref{eq:ignorability} of Assumption~\ref{assump:ignor} as follows
\begin{equation*}
      \{Y_i({\bf t}_i): {\bf t} \in \mathcal{T}\} \perp {\bf T} \mid {\bf X}_i, \mbox{ for each unit  } i \in \{1, \dots, n\}.
\end{equation*}
\end{corollary}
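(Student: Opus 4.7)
The plan is to derive Corollary~\ref{corol:ignor} in two short steps, combining Definition~\ref{def:outcome_exp} with the elementary fact that conditional independence is preserved under deterministic transformations of the conditioning variable. No new probabilistic machinery is needed; the corollary is essentially a re-indexing of Assumption~\ref{assump:ignor}.

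First, I would rewrite the left-hand side of Eq~\eqref{eq:ignorability}. By Definition~\ref{def:outcome_exp}, $Y_i({\bf z})=Y_i(t_i({\bf z}))$ for every ${\bf z}\in\mathcal{Z}$, so every element of the family $\{Y_i({\bf z}):{\bf z}\in\mathcal{Z}\}$ equals some $Y_i({\bf t}_i)$ with ${\bf t}\in\mathcal{T}$. Conversely, the surjectivity in Definition~\ref{assump:mapping} ensures that every ${\bf t}\in\mathcal{T}$ is attained as $t_i({\bf z})$ for some ${\bf z}$, so every element of $\{Y_i({\bf t}_i):{\bf t}\in\mathcal{T}\}$ appears in the original family. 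The two collections therefore generate the same $\sigma$-algebra, and Assumption~\ref{assump:ignor} can be restated as $\{Y_i({\bf t}_i):{\bf t}\in\mathcal{T}\}\perp {\bf Z}\mid{\bf X}_i$.

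Second, since ${\bf T}=t({\bf Z})$ is a deterministic (hence measurable) function of ${\bf Z}$ by Definition~\ref{assump:mapping}, the standard property that $A\perp B\mid C$ implies $A\perp g(B)\mid C$ for any measurable $g$ immediately yields $\{Y_i({\bf t}_i):{\bf t}\in\mathcal{T}\}\perp {\bf T}\mid{\bf X}_i$, completing the argument.

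The only subtlety worth making explicit is that the rewriting in the first step uses both directions of the exposure-mapping structure: the invariance $Y_i({\bf z})=Y_i({\bf z}')$ whenever $t_i({\bf z})=t_i({\bf z}')$ (to ensure the many-to-one collapse loses no information about potential outcomes) and the surjectivity of $t(\cdot)$ (to ensure no exposure level in $\mathcal{T}$ is missing from the re-indexed family). Because ${\bf T}$ takes values in the finite set $\{0,1,2\}^J$, measurability of $t(\cdot)$ is automatic, so this step introduces no additional technicality; the main obstacle is purely notational bookkeeping between the $\mathcal{Z}$-indexed and $\mathcal{T}$-indexed families.
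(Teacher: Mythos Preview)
Your argument is correct and takes a genuinely different route from the paper's own proof. The paper establishes the factorization $f(Y({\bf t}),{\bf T}\mid{\bf X})=f({\bf T}\mid{\bf X})\,f(Y({\bf t})\mid{\bf X})$ by working directly with densities: it applies the Coarea formula to express $f({\bf T}\mid Y({\bf t}),{\bf X})$ as an integral of $f({\bf Z}\mid Y({\bf t}),{\bf X})$ over the level set $\{{\bf z}:t({\bf z})={\bf t}\}$ (weighted by the $J$-dimensional Jacobian $\mathcal{J}_J t({\bf z})$ and the Hausdorff measure $\mathcal{H}^{p\times n-J}$), uses Definition~\ref{def:outcome_exp} to replace $Y({\bf t})$ by $Y({\bf z})$ inside the integrand, invokes Assumption~\ref{assump:ignor} to drop the conditioning on $Y({\bf z})$, and then reverses the Coarea step. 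Your approach bypasses this machinery entirely by observing that the $\mathcal{Z}$-indexed and $\mathcal{T}$-indexed families of potential outcomes generate the same $\sigma$-algebra and then appealing to the standard closure of conditional independence under measurable functions of either side. Your route is more elementary and arguably more robust: it does not require densities to exist, nor the smoothness or Lipschitz regularity on $t(\cdot)$ that the Coarea formula presupposes (a point that is in fact delicate here, since the map lands in the finite set $\{0,1,2\}^J$). The paper's density-level calculation makes the pushforward of probability mass under the exposure mapping explicit, but at the cost of substantially heavier geometric-measure-theoretic tools.
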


\begin{proof}
Given Assumption~\ref{assump:ignor},  Definitions~\ref{assump:mapping}-\ref{def:outcome_exp}, and the Coarea formula \citep[see][]{negro2024sample}, we can write
\begin{align*}
    f(Y({\bf t}),{\bf T}\mid {\bf X}) & = f({\bf T} \mid Y({\bf t}), {\bf X}) f(Y({\bf t})\mid {\bf X})\\
    & \overset{(\text{Coarea f.})}{=} \int_{{\bf z}:{\bf t}=t({\bf z})} \frac{f({\bf Z}\mid Y({\bf t}),{\bf X})}{\mathcal{J}_J t({\bf z})}\;d\mathcal{H}^{{p\times n} - J}({\bf z}) \; f(Y({\bf t})\mid {\bf X})\\
    & \overset{(\text{Def~\ref{def:outcome_exp}})}{=} \int_{{\bf z}:{\bf t}=t({\bf z})} \frac{f({\bf Z}\mid Y({\bf z}),{\bf X})}{\mathcal{J}_J t({\bf z})}\;d\mathcal{H}^{{p\times n} - J}({\bf z}) \; f(Y({\bf t})\mid {\bf X})\\
    & \overset{(\text{Ass~\ref{assump:ignor}})}{=} \int_{{\bf z}:{\bf t}=t({\bf z})} \frac{f({\bf Z}\mid {\bf X})}{\mathcal{J}_J t({\bf z})}\;d\mathcal{H}^{{p\times n} - J}({\bf z}) \; f(Y({\bf t})\mid {\bf X})\\
    & \overset{(\text{Coarea f.})}{=} f({\bf Z}\mid {\bf X}) \; f(Y({\bf t})\mid {\bf X});\\
\end{align*}
where $\mathcal{J}_J t({\bf z})$ denotes the $J$-dimensional Jacobian of $t(\cdot)$ and $\mathcal{H}^{{p\times n} - J}$ is the Hausdorff measure.
\end{proof}

Strong ignorability implies that, after adjusting for all relevant covariates $X$ (e.g., gender, education level, years in the US after immigration, etc), the treatment (i.e, nutrients intake) that each patient assumes is considered  random. Due to the previous assumption, this is also true for dietary patterns instead (i.e., ${\bf T}$) instead of nutrients intake  (i.e., ${\bf Z}$).

\subsection{Estimand and identifiability}
Our causal estimand of interest is the Conditional Average Treatment Effects (CATEs), where we conditional on observed covariates $\mathbf{x}$, to capture the heterogeneity that characterizes the causal effect in  our data, and the Factor Specific Conditional Average Treatment Effects (FS-CATEs) as a specific case of CATEs where we compare the causal effect on different exposure levels of one specific factor $j$, while the other ${\bf j}^{-1} =\{1,\dots, J\} \setminus j$ factors are kept fixed. This conditioning on the observed characteristics is crucial in nutritional epidemiology, where the causal effect of diet on risk factors may differ by gender, income, or level of education.

\begin{definition}[Causal estimands: CATEs and FS-CATEs]
Given two possible exposures ${\bf t}, {\bf t}' \in \mathcal{T}$, the causal estimand of CATEs is defined as:
\begin{equation}
    \tau(\mathbf{x}) := \mathbb{E}[Y_i({\bf t}) - Y_i({\bf t}') \mid \mathbf{X}_i = \mathbf{x}],
    \label{eq:CATE_general}
\end{equation}
where ${\bf t}=(t_1, \dots, t_j, \dots, t_J)$, ${\bf t}'=(t'_1, \dots, t'_j, \dots, t'_J)$, and $\exists\, j \in \{1,\dots,J\}$ such that $t_j \neq t'_j$.

The causal estimand of FS-CATE is a special case of the CATE, corresponding to the case where we want to study the conditional casual effect on varying only one factor $j$. Therefore, given the two possible exposures ${\bf t}, {\bf t}' \in \mathcal{T}$, the values for the remaining elements ${\bf j}^{-1}=\{1,\dots, J\} \setminus j$ are fixed, while we compare different values for $t_j, t_j' \in \{0,1,2\}$, such that $t_j \not= t_j'$:
\begin{equation}
    \tau_{(t_j',t_j)}(\mathbf{x},{\bf t}_{j^{-1}}) := \mathbb{E}[Y_i(T_{ij}=t_j, {\bf T}_{ij^{-1}}={\bf t}_{j^{-1}}) - Y_i(T_{ij}=t'_j, {\bf T}_{ij^{-1}}={\bf t}_{j^{-1}}) \mid \mathbf{X}_i = \mathbf{x}]. \label{eq:fs-cates}
\end{equation}
\end{definition}

FS-CATEs are particularly important in nutrition because they quantify how changing adherence to a single dietary pattern affects the outcome while holding other patterns fixed.
For example, in Figure~\ref{fig:outcome}, $\tau_{0,1}$ for the plant lipid–antioxidant factor and BMI represents the causal effect (FS-CATE) of moving from low to medium exposure; a negative value indicates a reduction in BMI. In particular, the negative value indicates that increasing the consumption of plant lipid–antioxidants from low to medium, while keeping the other nutritional patterns fixed, is associated with a reduction in BMI.

Both estimands are identifiable under Corollary~\ref{corol:ignor} and Assumption~\ref{assump:cons}.
Specifically, we have the following:
\begin{align}
    \tau(\mathbf{x}) & = \mathbb{E}[Y_i({\bf t})\mid \mathbf{X}_i = \mathbf{x}] - \mathbb{E}[Y_i({\bf t}') \mid \mathbf{X}_i = \mathbf{x}]\notag\\
    & \overset{(\text{Corol~\ref{corol:ignor}})}{=} \mathbb{E}[Y_i({\bf t})\mid \mathbf{X}_i = \mathbf{x}, {\bf T}_i = {\bf t}] - \mathbb{E}[Y_i({\bf t}') \mid \mathbf{X}_i = \mathbf{x}, , {\bf T}_i = {\bf t}']\notag\\
    & \overset{(\text{Ass~\ref{assump:cons}})}{=} \mathbb{E}[Y_i\mid \mathbf{X}_i = \mathbf{x}, {\bf T}_i = {\bf t}] - \mathbb{E}[Y_i \mid \mathbf{X}_i = \mathbf{x}, , {\bf T}_i = {\bf t}'];\notag\\
    \tau^j_{(t'_j,t_j)}(\mathbf{x},{\bf t}_{j^{-1}}) & = \mathbb{E}[Y_i(T_{ij}=t_j, {\bf T}_{ij^{-1}}={\bf t}_{j^{-1}}) \mid \mathbf{X}_i = \mathbf{x}] - \mathbb{E}[Y_i(T_{ij}=t'_j, {\bf T}_{ij^{-1}}={\bf t}_{j^{-1}}) \mid \mathbf{X}_i = \mathbf{x}]\notag\\
    & \overset{(\text{Corol~\ref{corol:ignor}})}{=} \mathbb{E}[Y_i(T_{ij}=t_j, {\bf T}_{ij^{-1}}={\bf t}_{j^{-1}}) \mid \mathbf{X}_i = \mathbf{x}, T_{ij}=t_j, {\bf T}_{ij^{-1}}={\bf t}_{j^{-1}}]\notag \\
    & \quad\quad\, - \mathbb{E}[Y_i(T_{ij}=t'_j, {\bf T}_{ij^{-1}}={\bf t}_{j^{-1}}) \mid \mathbf{X}_i = \mathbf{x}, T_{ij}=t'_j, {\bf T}_{ij^{-1}}={\bf t}_{j^{-1}}]\notag\\
     & \overset{(\text{Ass~\ref{assump:cons}})}{=} \mathbb{E}[Y_i \mid \mathbf{X}_i = \mathbf{x}, T_{ij}=t_j, {\bf T}_{ij^{-1}}={\bf t}_{j^{-1}}] - \mathbb{E}[Y_i \mid \mathbf{X}_i = \mathbf{x}, T_{ij}=t'_j, {\bf T}_{ij^{-1}}={\bf t}_{j^{-1}}].\label{eq:fs_identif}
\end{align}
Conditional expectations $\mathbb{E}[Y_i \mid \cdot]$ are functions of observable variables, so they are identifiable. In this work, we estimate them using the Bayesian Causal Forest with three ordered treatment levels introduced in the following section.

\subsection{Factor analysis}
\label{subsec:factors}
The first step in constructing the exposure mapping is factor analysis to identify the factors $J < p$, where $J$ and $p$ are the dimensions of the spaces $\mathcal{T}$ and $\mathcal{Z}$, respectively.

Factor analysis is a widely used method for extracting latent constructs \citep[see][]{Lopes2004, bhattacharya2011} and has become a well-established approach in nutritional epidemiology for identifying dietary patterns \citep[e.g.][]{edefonti2012nutrient, de2019shared, castello2022adherence, de2022shared}.
Factor analysis is a statistical method that models observed high-dimensional data through a lower-dimensional latent structure, aiming to identify underlying patterns in complex datasets \citep{thurstone1931multiple, anderson2004introduction}. Let \(\mathbf{Z_i} \in (\mathbb{R}_0^+)^{p} \) the data vector where each component represents the consumption of a specific food item for unit $i$, its standardization is defined as $Z^*_{ip}=(Z_{ip}- \mu_p)/\sigma_p$, where $\mu_p$ and $\sigma_p$ are the mean and standard deviation of the $p$-th variable, such that $Z^*_{ip} \in \mathbb{R}$.

Factor analysis assumes the following model:
\begin{equation}
    \mathbf{Z}^*_i = \Lambda \mathbf{l}_i + 
\boldsymbol{\varepsilon}_i,
\end{equation}
where \(\Lambda\) is a \(p \times J\) factor loading matrix, \(\mathbf{l}_i\) is a \(J\)-dimensional vector of latent factor scores, and $J$ is the number of factors. The error term \(\boldsymbol{\varepsilon}_i\) is assumed to be normally distributed, $\boldsymbol{\varepsilon}_i \sim \mathcal{N}_P(0, \Psi)$, with $\Psi \in \mathbb{R}^{p \times p}$ is diagonal matrix with elements $(\psi_{1}, \dots, \psi_p)$.

The latent factor scores are mutually independent and normally distributed, i.e,  $\mathbf{l}_i \sim \mathcal{N}_J(0, I_J)$, where $I_J$ is the $J \times J$ identity matrix.
Consequently, the covariance of the observed data can be expressed as: $\Sigma = \Lambda \Lambda^T + \Psi,$ where the off-diagonal elements of \(\Lambda \Lambda^T\) capture the pairwise covariances of the $p$ variables.

In the Bayesian factor analysis framework, priors are imposed on the factor loading matrix to encourage shrinkage and sparsity, and facilitate estimation. Among the priors proposed in the literature, we adopt the  gamma process shrinkage prior, introduced by \citet{bhattacharya2011}, applied to each element \(\lambda_{hj}\) of the factor loading matrix $\Lambda$, where $h\in \{1 \ldots p\}$ indexes the observed variables and $j\in \{1, \ldots, \infty\}$  indexes the factor:

\begin{equation*}
    \lambda_{hj} | \omega_{hj}, \tau_j \sim \mathcal{N}(0, \omega_{hj}^{-1} \tau_j^{-1}), 
\end{equation*}
where \(\omega_{hj} \sim \Gamma(\nu/2, \nu/2)\) represents the local shrinkage and \(\tau_j = \prod_{l=1}^{j} \delta_l\) controls global shrinkage across factors. The independent stochastic process \(\{\delta_l\}_{l\leq 1}\) follows Gamma probability distributions, such that the fist element $l=1$ has distribution equal to $\delta_1 \sim \Gamma(a_1, 1)$, while the following elements $l\leq 2$ has distribution
\(\delta_l \sim \Gamma(a_2, 1)\).This prior structure allows an automatic selection of the number of factors, effectively truncating it to an upper bound \(J^*\), usually smaller than \(p\).

To conclude the prior specification, we follow the well-established literature \citep[see][]{Lopes2004, rovckova2016}, where the diagonal elements of the error variance matrix \(\mathbf{\Psi}\) have an inverse-gamma distribution as prior: $\psi_p^{-2} \sim \Gamma(a^\psi, b^\psi)$.

\subsection{Exposure definition}
\label{subsec:treatment_def}
To complete the definition of the exposure mapping $t_i({\bf z}):  (\mathbb{R}_0^+)^{p\times n} \xrightarrow[]{} \{0,1,2\}^J$, where $J$ are the factors identified in Section \ref{subsec:factors}, we map the posterior point estimation of the factor score $l_{ij}$, indicated with $\hat{l}$, in a discrete ordered set such that $\hat{l}_{ij} \in \mathbb{R} \xrightarrow[]{} \{0,1,2\}$, for each unit $i$ and factor $j$.

Specifically, we define three ordered levels for the variable $T_{ij}$ by partitioning individuals for each factor $j\in \{1, \dots, J\}$ into three groups based on the terciles of the corresponding factor scores $\hat{\bf l}_j=\hat{l}_{1j}, \dots \hat{l}_{nj}$. The use of tertiles is driven by the common choice in nutritional epidemiology  to define low, medium, and high adherence/consumption groups \citep[e.g.][]{edefonti2012nutrient, de2019shared, castello2022adherence, de2022shared}.

For each factor \( j \), let \( Q_1^j \) and \( Q_2^j \) denote the first and second terciles of the distribution of $\hat{\bf l}_j$, respectively. We then assign each individual \( i \) to one of three treatment levels based on these cutoffs:

\begin{equation}
t_{ij} =
\begin{cases}
0, & \text{if } \hat{l}_{ij} \leq Q_1^j, \\
1, & \text{if } Q_1^j < \hat{l}_{ij} \leq Q_2^j, \\
2, & \text{if } \hat{l}_{ij} > Q_2^j.
\end{cases}
\label{eq:treatment_def}
\end{equation}

Thus, for each unit $i\in \{1, \dots, n\}$, the function $t_{i}(\mathbf{Z}): (\mathbb{R}_0^+)^{p\times n} \xrightarrow[]{} \{0,1,2\}^J$, defines $J$ independent factors (where the independency come from the properties of factor analysis). We indicates \( T_{ij} = 0 \) the low consumption of dietary pattern $j$,  \( T_{ij} = 1 \) the medium consumption of the dietary pattern $j$, and \( T_{ij} = 2 \) the high consumption of the same pattern.

\subsection{Bayesian causal forest for three treatment levels}
In this paper, we propose a flexible Bayesian nonparametric model to estimate $\mathbb{E}[Y_i \mid \mathbf{X}_i = \mathbf{x}, T_{ij}=t, {\bf T}_{ij^{-1}}={\bf t}_{j^{-1}}]$ in Equation~\ref{eq:fs_identif}, leveraging the Bayesian Additive Regression Tree (BART) introduced by \citet{Chipman_2010}.
Specifically, we extend the Bayesian Causal Forest (BCF) model, introduced by \citet{hahn2020bayesian} in the context of binary treatment, to handle the setting with three ordered treatment levels (BCF3L).

Formally, we assume that the outcome variable is modeled as follows:
\begin{gather}
Y_i \mid \mathbf{X}_i = \mathbf{x}, T_{ij}=t_j, {\bf T}_{ij^{-1}}={\bf t}_{j^{-1}} \sim f(\mathbf{x},  t_j, {\bf t}_{j^{-1}}) + \epsilon_i, \label{eq:model} \\
f(\mathbf{x},  t_j, {\bf t}_{j^{-1}}) = \mu(\mathbf{x}, {\bf t}_{j^{-1}}, \boldsymbol{\hat{\pi}}_{ij}) + \tau_{0,1}(\mathbf{x}, {\bf t}_{j^{-1}})\mathbb{I}_{\{t_{j} > 0\}} + \tau_{1,2}(\mathbf{x}, {\bf t}_{j^{-1}})\mathbb{I}_{\{t_{j} > 1\}},\notag\\
\epsilon \sim N( 0, \sigma^2); \notag
\end{gather}
where $\mu$, $\tau_{0,1}$, and $\tau_{1,2}$ are given independent BART priors \citep{Chipman_2010}.
Specifically, each function is modeled as a sum of 
$b$ regression trees:
\begin{gather*}
\mu(\cdot) = \sum_{j=1}^{b_\mu} g_\mu(\cdot; A_{\mu, j},M_{\mu, j}) \quad \mbox{ and }\quad \tau_{t,t'}(\cdot) = \sum_{j=1}^{b_\mu} g_{\tau_{t,t'}}(\cdot; A_{\tau_{t,t'}, j},M_{\tau_{t,t'}, j}),
\end{gather*}
where $\{A_{\cdot, j}\}_{j=1}^{b_\cdot}$ denote the sequence of binary trees that are composed of decision rules that lead to bottom nodes, also called leaves. Given the $m$ bottom nodes for each $j$-th tree, the parameters $M_{\mu,j}=\{\mu_{j1},\dots, \mu_{jm}\}$ are the mean responses of the subgroups of observations that fall in each node.

The independent BART prior allows (i)  the definition of a flexible function to capture the treatment effect heterogeneity through the sum of many different trees, and (ii) the use of shrinkage parameters to maintain regularization and prevent overfitting. In particular, regularization is achieved by controlling the probability of a node splitting through the prior  $\eta(1 + h)^{-\beta}$, where $h$ denotes the tree depth. The parameters $\eta\in (0,1)$ and $\beta \in [0,\infty)$ define a decreasing function that ensures that deeper trees are less probable. The leaf parameters are drawn from a normal distribution with variance scaled by the number of trees, encouraging smooth function estimation. 

Following \citet{hahn2020bayesian}, $\mu(\cdot)$ models the baseline response, while $\tau_{1,0}(\cdot)$ and $\tau_{2,1}(\cdot)$ model the heterogeneous treatment effects in increasing treatment levels, such that:
\begin{gather*}
\mu(\mathbf{x}, {\bf t}_{j^{-1}}, \boldsymbol{\hat{\pi}}_{ij}) = \mathbb{E}[Y_i \mid T_{ij}=0, {\bf T}_{ij^{-1}}={\bf t}_{j^{-1}}, \mathbf{X}_i=\mathbf{x}, \boldsymbol{\hat{\pi}}_{ij}] \\
\tau_{0,1}(\mathbf{x},{\bf t}_{j^{-1}}) = \mathbb{E}[Y_i \mid T_{ij}=1, {\bf T}_{ij^{-1}}={\bf t}_{j^{-1}}, \mathbf{X}_i=\mathbf{x}] - \mathbb{E}[Y_i \mid T_{ij}=0, {\bf T}_{ij^{-1}}={\bf t}_{j^{-1}}, \mathbf{X}_i=\mathbf{x}],\\
\tau_{1,2}(\mathbf{x},{\bf t}_{j^{-1}}) = \mathbb{E}[Y_i \mid T_{ij}=2, {\bf T}_{ij^{-1}}={\bf t}_{j^{-1}}, \mathbf{X}_i=\mathbf{x}] - \mathbb{E}[Y_i \mid T_{ij}=1, {\bf T}_{ij^{-1}}={\bf t}_{j^{-1}}, \mathbf{X}_i=\mathbf{x}].
\end{gather*}
where $\tau_{t'_j,t_j}(\mathbf{x})$ represents the statistical estimands for the FS-CATEs defined in Eq.~\eqref{eq:fs_identif}. 

Into the baseline response $\mu(\cdot)$, we incorporate the estimated generalized propensity scores for the multi-level treatment, denoted as $\boldsymbol{\hat{\pi}}_{ij}$. Specifically, generalized propensity scores are defined as $\boldsymbol{\hat{\pi}}_{ij} =( \pi_{ij1}, \pi_{ij2})$, where $ \pi_{ijt} = \mathbb{P}(T_{ij} = t|\mathbf{X}_i={\bf x}, {\bf T}_{ij^{-1}}={\bf t}_{j^{-1}})$ indicates the probability of observed that level $t \in \{1,2\}$ for the $j$-th factor given the confounder ${\bf x}$ and the other $j^{-1}$ factors equal to ${\bf t}_{j^{-1}}$.


\section{Simulation results}
\label{sec:simulation}

We conducted an extensive simulation study to evaluate the proposed approach: the BCF3L model and the exposure mapping based on the factor model.
In particular, we assess the ability to recover heterogeneity in the two causal effects $\tau_{0,1}$ and $\tau_{1,2}$, defined in Eq. \eqref{eq:fs-cates}.
We compare our BCF3L  with Bayesian linear regression using the horseshoe prior (BLR-HR) by \citet{carvalho2010horseshoe} and the BART model for three ordered treatment levels. 

We consider three data-generating scenarios. Scenarios~1 and~2 involve a single observed treatment variable with three ordered levels and are designed to evaluate the performance of BCF3L in settings where exposure mapping is not required. Scenario~3 mirrors the nutritional application, featuring a high-dimensional treatment variable for which exposure mapping via a Bayesian factor model is essential to identify latent factors that serve as exposure variables.
For each scenario, we simulate $50$ replicates, with a sample size of $n=500$.

Across all scenarios, we generate five covariates $\mathbf{X}=(X_1,\ldots,X_5)$. The first three are continuous: $X_1 \sim \mathcal{N}(0,1)$, $X_2 \sim \mathcal{N}(0,2^2)$, and $X_3 \sim \mathcal{N}(0,3^2)$. The remaining two are Bernoulli variables with success probabilities 0.7 and 0.5, respectively.  In Scenarios~1 and~2, the treatment variable $T \in \{0,1,2\}$ is assigned from a multinomial distribution with probabilities proportional to $ X_1 + X_2$, $0.4X_2-X_1$, and $1-0.1X_3+X_4$. 
In Scenario~3, treatment assignment is derived from latent factors as described below.

\vspace{0.25em}
\underline{Scenario 1:} The linear baseline response function and the heterogeneous treatment effects are, respectively,
\begin{gather*}
    \mu({\bf X}) = 2 - e^{X_1/3} + 1.5X_4X_5 - 2(1-X_4)(1-X_5),\\
    \tau({\bf X})_{0,1} = 0.2X_4,\\
    \tau({\bf X})_{1,2} = 1 + 0.2X_2X_5 + 0.7X_1X_4.
\end{gather*}

\vspace{0.25em}
\underline{Scenario 2:} The nonlinear baseline response function and the heterogeneous treatment effects are, respectively,
\begin{gather*}
    \mu({\bf X}) = 2 - \mid X_1/3-1\mid + 1.5X_4X_5 - 2(1-X_4)(1-X_5),\\
    \tau({\bf X})_{0,1} = 0.1X_2,\\
    \tau({\bf X})_{1,2} = 1 + 0.5 \mid X_3\mid + 0.7X_2X_4.
\end{gather*}

For Scenarios~1 and~2, the outcome is generated according to the Equation \eqref{eq:model}.

\vspace{0.25em}
\underline{Scenario 3:} The variable $Z$ is a matrix with $n=500$ units and $p=20$ variables simulated from a factor model with $J=4$ factors. The factor loading matrix $\Lambda$ of dimension $20 \times 4$ has entries drawn from a uniform distribution on $\mathrm{Unif}[0.6,1]$, half of the entries are randomly sign-flipped to induce variability and one-fifth are set to zero to induce sparsity. Factor scores $\mathbf{l}$ are a matrix $4\times 500$ with values from a standard Gaussian distribution. Each factor ($j=1,2,3$) is discretized into tertiles according to Eq.~\eqref{eq:treatment_def}, yielding three ordered levels for treatment variable $T_j$ ($j=1,2,3$). The outcome depends on both the treatment variable and a baseline response function that is a constant:

\begin{gather*}
    Y = 2 +0.5X_1\mathbb{I}_{\{T_1\geq1\}} +0.1X_2\mathbb{I}_{\{T_1=2\}} +0.2X_1\mathbb{I}_{\{T_2\geq1\}} + 0.3X_3\mathbb{I}_{\{T_2=2\}} \\ \; - 0.2X_3\mathbb{I}_{\{T_3\geq1\}} -X_2/3\mathbb{I}_{\{T_3=2\}} + 0.4X_1\mathbb{I}_{\{T_4=2\}} +0.2X_2\mathbb{I}_{\{T_4\geq1\}}.
\end{gather*}

Figure~\ref{fig:sim_bias} and Figure~\ref{fig:sim_rmse} summarize  bias and root mean squared error (RMSE), respectively.

In Scenario 1, BCF3L achieves bias centered around zero for both causal contrasts, $\tau_{0,1}$, and $\tau_{1,2}$, closely matching BART, whereas BLR-HS exhibits evident bias, particularly for $\tau_{0,1}$. The RMSE results confirm that BCF3L  yields the smallest overall error. 

In Scenario 2, BCF3L  and BART perform comparably in terms of bias and RMSE for $\tau_{1,2}$, but BCF3L  estimates $\tau_{0,1}$ with smaller RMSE, indicating more accurate point-estimation compared to BART. In contrast, BLR-HS shows substantial bias and inflated RMSE for both causal effects.

In Scenario 3, which requires exposure mapping, BCF3L  still provides the lowest median RMSE, although a few outliers appear in the bias distribution. BART exhibits wider interquartile ranges in bias, and BLR-HS remains systematically biased, especially for $\tau_{0,1}$. Overall, BCF3L demonstrates superior accuracy and uncertainty calibration across all three scenarios.

Additional simulation results, including  coverage rates, are reported in the Supplementary Material (Section: \emph{More results of simulation study}). These results show that BCF3L  shows consistently high coverage while preserving low bias and RMSE.
Additional simulation results, including coverage rates, are reported in the Supplementary Material (Section~A). These results indicate that BCF3L attains consistently high coverage while maintaining low bias and RMSE.

\begin{figure}[ht] 
 \centering
    \includegraphics[width=0.9\textwidth]{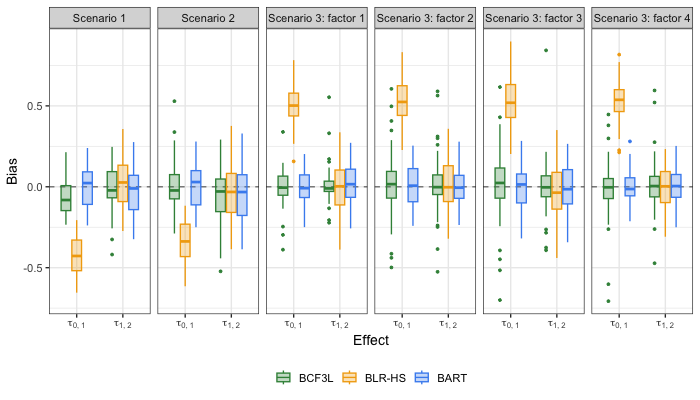} 
    \caption{Bias of the estimated causal effects $\tau_{0,1}$ (medium vs.\ low) and $\tau_{1,2}$ (high vs.\ medium)  across 50 simulated datasets.  Panels (left to right) show: Scenario~1, Scenario~2, and Scenario 3 (factors 1-4). Colors denote methods: BCF3L (green), BLR‑HS (yellow), and BART (blue).}
    \label{fig:sim_bias}
\end{figure}

\begin{figure}[ht] 
 \centering
    \includegraphics[width=0.9\textwidth]{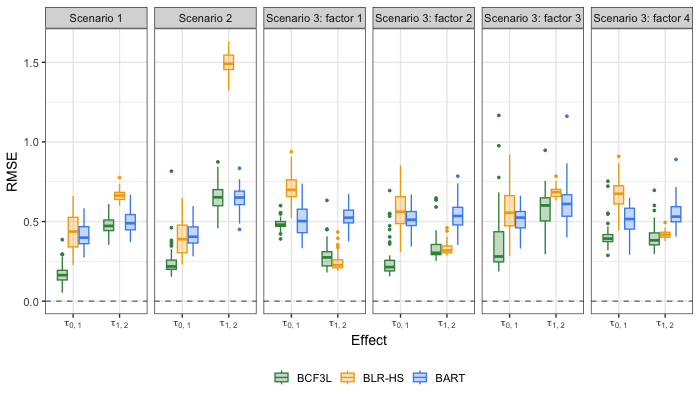} 
    \caption{Root mean squared error (RMSE) of the estimated causal effects $\tau_{0,1}$ (medium vs.\ low) and $\tau_{1,2}$ (high vs.\ medium). Panels (left to right) correspond to Scenario~1, Scenario~2, Scenario~3 (factors 1–4).  Colors denote methods: BCF3L (green), BLR‑HS (yellow), and BART (blue).}
    \label{fig:sim_rmse}
\end{figure}


\section{Nutritional Epidemiology Application}
\label{sec:application}

To investigate the causal relationship between dietary intake and key health risk factors, we apply our methodology to the HCHS/SOL study, as described in Section \ref{sub:HCHS}. We first identify dietary patterns through the proposed exposure mapping. As defined in Section~\ref{sec:set_up}, we first estimate the factor model such that, for each pattern, we obtain the individual-level factor score (pattern adherence) and then discretize it into three ordered exposure levels (low, medium, high) yielding a three-level ordered treatment $T_j \in \{0,1,2\}$, as in Eq.~\eqref{eq:treatment_def}. Finally, we estimate the heterogeneous causal effects of these dietary exposures on two outcomes: the body mass index (BMI) and the fasting insulin levels (mU/L).

Following the notation in Section \ref{subsec:factors}, $\mathbf{Z}$ denotes the nutritional data matrix of dimension $9,298 \times  53$ (participants $\times$ nutrients) after the preprocessing described in Section \ref{sub:HCHS}. We analyze two continuous outcomes $Y$, BMI and fasting insulin levels, summarized in Tables \ref{tab:BMI_summary} and \ref{tab:Insulin_summary}.

\subsection{Dietary Patterns}
We first estimate the Bayesian factor model and select the number of factors (i.e.,  dietary patterns) using the explained variance, a standard criterion balancing dimensionality reduction and variance explained. Specifically,  we retain factors with explained variance $>5\%$, yielding $k=6$ patterns. We then refit the model with $k=6$  to obtain the factor loading and score.

Figure~\ref{fig:heatmap} presents the heatmap of the estimated factor loading matrix, after applying the varimax transformation \citep{kaiser1960application}. Following nutritional literature, we name each factor based on important loadings (i.e.\(|\lambda_{hj}| \geq 0.45\)). The first factor, namely \textit{plant lipid–antioxidant}, shows significant loadings on starch, linoleic acid, linolenic acid, LCSFA (long-chain saturated fatty acids) and LCMFA (long-chain monounsaturated fatty acids), a nutrient profile commonly found in plant-derived oils and foods rich in antioxidants, such as seeds and whole grains. The second factor, labeled \textit{dairy product}, is marked by high coefficients for calcium, vitamin D, vitamin B2 and SCSFA (short-chain saturated fatty acids), which are distinctive nutrients of milk and dairy-based foods. The third factor, representing \textit{processed food}, loads heavily on trans fats and industrial fatty acids (linolelaidic and elaidic acids): compounds not naturally abundant in whole foods but introduced through industrial processing. The fourth factor, called \textit{plant-based}, clusters vegetable protein, natural folate, phytic acid, magnesium, both soluble and insoluble dietary fiber, reflecting whole-plant nutrient patterns, predominantly sourced from legumes, whole grains, and vegetables. The fifth component, indicating \textit{animal protein}, is defined by animal protein content, methionine, selenium, and arachidonic acid; a micronutrient and amino acid profile characteristic of meat, eggs, and other animal-sourced foods. While the sixth factor, namely \textit{seafood}, captures marine lipid signatures with dominant loadings on EPA (eicosapentaenoic acid), DPA (docosapentaenoic acid), DHA (docosahexaenoic acid), which are long-chain omega-3 fatty acids found primarily in fish and other seafood.

Each of these six dietary patterns defines an exposure variable $T_j$, with $j=1, \dots, 6$. As in Eq.~ \eqref{eq:treatment_def}, we divide each factor score distribution into tertiles, assigning individuals to one of three ordered treatment levels: low, medium, or high consumption.
Consequently,  an individual can simultaneously fall into different exposure levels across patterns (e.g., a plant-forward eater may be high on plant-based and low on animal protein or seafood).

\begin{figure}[ht] 
    \centering
    \includegraphics[width=0.9\textwidth]{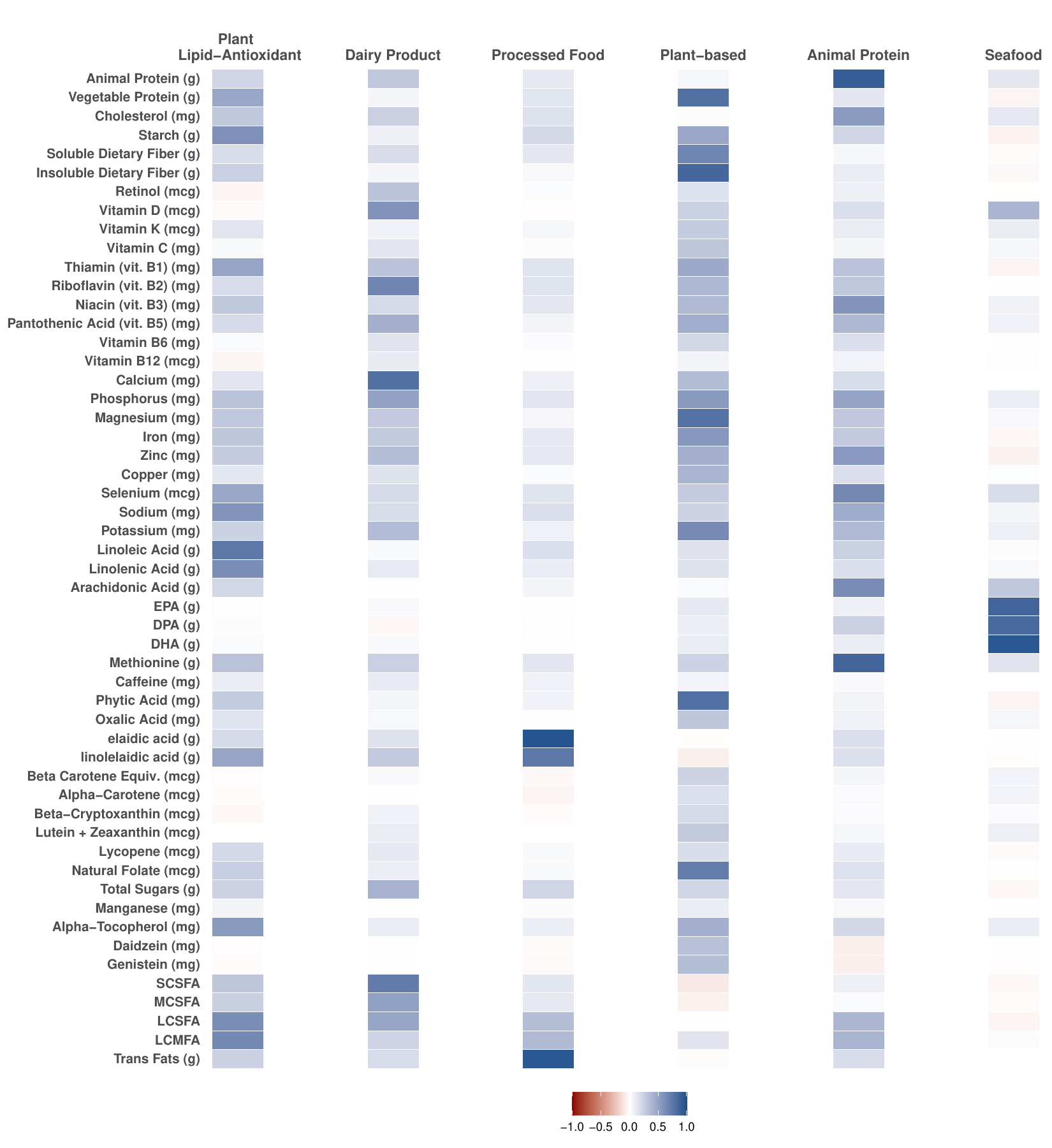} 
    \caption{Varimax-rotated factor loadings for 53 nutrients in HCHS/SOL across the six dietary patterns.}
    \label{fig:heatmap}
\end{figure}

\subsection{Heterogeneous Causal Effects}
Then, we estimate the causal effect of each dietary pattern on BMI and fasting insulin levels by fitting our BCF3L model separately for each outcome, adjusting for the covariates described in Section~\ref{sub:HCHS}.

Figure~\ref{fig:outcome} presents the  posterior means and the corresponding 90\% credible intervals for the causal effects $\tau_{0,1}$ (medium vs.\ low) and $\tau_{1,2}$ (high vs.\ medium).  The estimated causal effects of increasing food intake from low to medium levels, compared to increases from medium to high, exhibit wider credible intervals across dietary patterns and health outcomes. This affects the statistical significance of the causal effects, as indicated by whether the credible intervals include zero.For example, for BMI under the \textit{plant-based} pattern, the posterior means of $\tau_{0,1}$ and $\tau_{1,2}$ are both close to zero; however, the credible interval for $\tau_{0,1}$ excludes zero (suggesting a modest reduction in BMI with moderate intake), whereas the interval for $\tau_{1,2}$ includes zero (no significant effect from medium to high). Similar behavior is observed for the plant lipid-antioxidant and seafood patterns. In general, these results suggest that high increases in consumption may not substantially affect health outcomes, while slight increases in the consumption of specific dietary patterns can lead to moderate changes in BMI and insulin regulation.

In more  detail, higher adherence to the plant-lipid antioxidant pattern is significantly associated with lower BMI, whereas effects on fasting insulin are not statistically significant at either consumption type. This inverse association with BMI is plausible given higher intake of complex carbohydrates and antioxidant-rich lipids, which can enhance satiety and reduce oxidative stress (\citet{Nanri2013}; \citet{Aune2018}; \citet{Zhao2020}).

Similarly, higher adherence to the dairy product pattern is negatively associated with BMI and fasting insulin levels. Similar associations between vitamin D, calcium intake and both BMI and insulin resistance have been reported in population-based studies (\citet{Pittas2010}; \citet{Song2013}; \citet{Guo2020}). 

No statistically significant effects are detected for the processed food pattern on BMI or fasting insulin, although a causal positive trend for fasting insulin is evident. This direction aligns with evidence linking trans-fat intake to insulin resistance (\citet{Mozaffarian2006}; \citet{Wang2021}). 

Higher adherence to the plant-based pattern isassociated with lower BMI (not always significant at the highest intake) and with reduced fasting insulin levels among individuals with medium to high consumption. These findings suggest that plant-derived nutrients, dietary fiber and natural folate can improve insulin sensitivity through enhanced glucose regulation, reduced oxidative stress, and modulation of gut–metabolic interactions. These findings are consistent with benefits of fiber and folate for glycemic regulation and insulin sensitivity(\citet{McKeown2018}; \citet{Kim2021}).

In contrast, the Animal-Protein Pattern presents a slightly different conclusion. For BMI, the contrast from low to medium exposure ($\tau_{0,1}$) is associated with a larger reduction than the contrast from medium to high ($\tau_{1,2}$), which is closer to zero.  For fasting insulin, however, the magnitude of $\tau_{1,2}$ is larger, suggesting that higher intake of animal-based foods leads to a more pronounced improvement in insulin regulation. Although such favorable metabolic effects of animal-protein–rich diets are not commonly reported, several mechanisms may explain this trend. Moderate protein intake can enhance satiety, preserve lean body mass, and increase thermogenic energy expenditure, while zinc and selenium support pancreatic $\beta$-cell function and antioxidant defense \citep{Weigle2005}. Moreover, methionine contributes to methylation and glutathione synthesis, potentially improving redox balance and insulin sensitivity (\citet{VanNielen2014}; \citet{Rietman2014}).

Finally, a higher adherence to the seafood pattern is associated with a reduction in both BMI and fasting insulin levels. The estimated causal effects  are consistent with biological evidence that long-chain n-3 polyunsaturated fatty acids (EPA/DHA)  can improve membrane fluidity, modulate inflammation, and enhance insulin sensitivity and promote healthier body composition  (\citet{Virtanen2014}; \citet{Lankinen2018}; \citet{Belalcazar2021}).

\begin{figure}[ht] 
    \centering
    \includegraphics[width=0.9\textwidth]{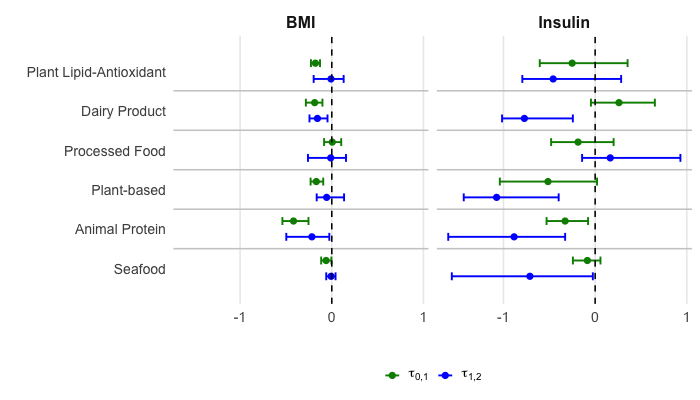} 
    \caption{Posterior means and 90\% credible intervals of the causal effects of the six identified dietary patterns on body mass index (BMI) and fasting insulin levels: $\tau_{0,1}$ (medium vs.\ low) and $\tau_{1,2}$ (high vs.\ medium).  Estimates obtained from BCF3L adjusting for the covariates in Section~\ref{sub:HCHS}}
    \label{fig:outcome}
\end{figure}


\section{Discussion}
This paper introduces a novel methodology for estimating heterogeneous causal effects in the presence of a high-dimensional and correlated treatment variable. Our contributions are threefold. First, we formalize a causal inference framework under exposure mapping, including the  mapping function properties, the causal estimands, and the assumption and demonstration of the estimand identification. Second, we propose a factor-model-based exposure mapping that reduces dimensionality of nutritional intake data and, consistent with common practice in nutritional epidemiology, discretizes each latent dietary pattern into three ordered exposure levels. Third, we develop an innovative extension of the Bayesian Causal Forest model, named BCF3L, that accommodates ordinal treatments by flexibly modeling the potential outcomes and estimating the causal effects for both medium versus low and high versus medium treatment levels.

 Motivated by a specific need in causal nutritional epidemiology,  the proposed approach and the BCF3L model are broadly applicable to other settings with three ordered treatment levels and complex, high-dimensional exposures.  
In addition, we develop an open-source \texttt{R} package, \texttt{BCF3L}, which implements our proposed nonparametric model and can be used in a wide range of applications.

Through a comprehensive simulation study, we demonstrate the ability of BCF3L to accurately recover heterogeneous causal effects across multiple data-generating scenarios,  including standard settings with an observed ordinal treatment and high-dimensional settings with latent factors requiring exposure mapping. Comparisons with Bayesian linear regression using horseshoe prior \citep{carvalho2010horseshoe} and with the BART extension for three treatment levels \citep{Yifan2025} show that BCF3L achieves consistently lower bias and mean square errors.

Our method applied to nutritional epidemiology data from the  HCHS/SOL study \citep{lavange2010sample}, allows us to understand the causal link between nutrition consumption and health risk factors. Using intake data on 53 nutrients from more than 9,000 Hispanic/Latino adults recruited across four U.S. field centers,  we identify six dietary patterns: plant lipid–antioxidant, dairy product, processed food, plant-based, animal proteins, and seafood. Overall, our findings highlight that increasing the intake from low to medium levels yields wider credible intervals and less statistical significance compared to increases from medium to high. Moreover, high consumption of plant lipid–antioxidant, plant-based, and dairy products is associated with significant reductions in BMI and fasting insulin, while moderate intake does not show clear effects. In contrast, moderate animal protein intake is more beneficial than high consumption. Processed food intake significantly increases BMI at high levels and seafood shows minimal causal effects, with the exception of an improvement in insulin levels under high consumption.

Several directions for future research follow naturally from this work. First, the exposure mapping component could be adapted to other applications that require structured dimensionality reduction, including longitudinal exposures \citep{jin2022bayesian} and settings with complex survey responses \citep{goplerud2025estimating}. Second, an important extension would incorporate uncertainty from the Bayesian factor model directly into the BCF3L stage, enabling fuller propagation of exposure-mapping uncertainty into treatment effect estimation.
Finally, BCF3L could be extended to jointly model multiple ordered exposure variables, such as the simultaneous analysis of all dietary patterns, drawing inspiration from recent advances such as the VCBART model proposed by \citet{deshpande2020vcbart}.

\section*{Acknowledgements}
RDV was supported by the US National Institutes of Health, NIGMS/NIH COBRE CBHD P20GM109035  and `Programma per Giovani Ricercatori Rita Levi Montalcini'' granted by the Italian Ministry of Education, University, and Research. This manuscript was prepared using HCHS/SOL Research
Materials obtained from the NHLBI Biologic Specimen and Data Repository Information Coordinating Center and
does not necessarily reflect the opinions or views of the HCHS/SOL or the NHLBI.


\bibliographystyle{imsart-nameyear}
\bibliography{BCF}

@article{rubin1974estimating,
  title={Estimating causal effects of treatments in randomized and nonrandomized studies.},
  author={Rubin, Donald B},
  journal={Journal of Educational Psychology},
  volume={66},
  number={5},
  pages={688},
  year={1974},
  publisher={American Psychological Association}
}

@article{hahn2020bayesian,
  title={Bayesian regression tree models for causal inference: Regularization, confounding, and heterogeneous effects (with discussion)},
  author={Hahn, P Richard and Murray, Jared S and Carvalho, Carlos M},
  journal={Bayesian Analysis},
  volume={15},
  number={3},
  pages={965--1056},
  year={2020},
  publisher={International Society for Bayesian Analysis}
}

@article{Yifan2025,
  title={Diet and Cardiovascular Risk Factors: A Causal Inference Perspective Using Bayesian Additive Regression Trees with three treatment levell},
  author={Lu, Yifan and Zorzetto, Dafne and Edefonti, Valeria and De Vito, Roberta},
  journal={Work in progress},
  pages={ },
  year={2025},
  publisher={}
}

@article{carvalho2010horseshoe,
  title={The horseshoe estimator for sparse signals},
  author={Carvalho, Carlos M and Polson, Nicholas G and Scott, James G},
  journal={Biometrika},
  pages={465--480},
  year={2010},
  publisher={JSTOR}
}

@article{lichtenstein2003dietary,
  title={Dietary fat and cardiovascular disease risk: quantity or quality?},
  author={Lichtenstein, Alice H},
  journal={Journal of Women's Health},
  volume={12},
  number={2},
  pages={109--114},
  year={2003},
  publisher={Mary Ann Liebert, Inc.}
}

@article{islam2019trans,
  title={Trans fatty acids and lipid profile: A serious risk factor to cardiovascular disease, cancer and diabetes},
  author={Islam, Md Ashraful and Amin, Mohammad Nurul and Siddiqui, Shafayet Ahmed and Hossain, Md Parvez and Sultana, Farhana and Kabir, Md Ruhul},
  journal={Diabetes \& Metabolic Syndrome: Clinical Research \& Reviews},
  volume={13},
  number={2},
  pages={1643--1647},
  year={2019},
  publisher={Elsevier}
}

@article{castello2022adherence,
  title={Adherence to the Western, Prudent and Mediterranean Dietary Patterns and Colorectal Cancer Risk: Findings from the Spanish Cohort of the European Prospective Investigation into Cancer and Nutrition (EPIC-Spain)},
  author={Castell{\'o}, Adela  and others},
  journal={Nutrients},
  volume={14},
  number={15},
  pages={3085},
  year={2022},
  publisher={MDPI}
}

@article{edefonti2012nutrient,
  title={Nutrient-based dietary patterns and the risk of head and neck cancer: a pooled analysis in the International Head and Neck Cancer Epidemiology consortium},
  author={Edefonti, Valeria and others},
  journal={Annals of Oncology},
  volume={23},
  number={7},
  pages={1869--1880},
  year={2012},
  publisher={Elsevier}
}

@article{de2023multi,
  title={Multi-study factor regression model: an application in nutritional epidemiology},
  author={De Vito, Roberta and Avalos-Pacheco, Alejandra},
  journal={arXiv preprint arXiv:2304.13077},
  year={2023}
}

@article{kromhout2001diet,
  title={Diet and cardiovascular diseases.},
  author={Kromhout, Daan},
  journal={The journal of nutrition, health \& aging},
  volume={5},
  number={3},
  pages={144--149},
  year={2001}
}

@article{dominici2020controlled,
  title={From controlled to undisciplined data: estimating causal effects in the era of data science using a potential outcome framework},
  author={Dominici, Francesca and Bargagli-Stoffi, Falco J and Mealli, Fabrizia},
  journal={Harvard Data Science Review},
  year={2021}
}

@article{pearson2017reducing,
  title={Reducing US cardiovascular disease burden and disparities through national and targeted dietary policies: a modelling study},
  author={Pearson-Stuttard, Jonathan and Bandosz, Piotr and Rehm, Colin D and Penalvo, Jose and Whitsel, Laurie and Gaziano, Tom and Conrad, Zach and Wilde, Parke and Micha, Renata and Lloyd-Williams, Ffion and others},
  journal={PLoS medicine},
  volume={14},
  number={6},
  pages={e1002311},
  year={2017},
  publisher={Public Library of Science San Francisco, CA USA}
}

@article{gbd2015global,
  title={Global, regional, and national comparative risk assessment of 79 behavioural, environmental and occupational, and metabolic risks or clusters of risks in 188 countries, 1990--2013: a systematic analysis for the Global Burden of Disease Study 2013},
  author={{GBD 2013 Risk Factors Collaborators}},
  journal={Lancet},
  volume={386},
  number={10010},
  pages={2287},
  year={2015},
  publisher={Europe PMC Funders}
}

@article{deshpande2020vcbart,
  title={VCBART: Bayesian trees for varying coefficients},
  author={Deshpande, Sameer K and Bai, Ray and Balocchi, Cecilia and Starling, Jennifer E and Weiss, Jordan},
  journal={arXiv preprint arXiv:2003.06416},
  volume={2},
  pages={32--33},
  year={2020},
  publisher={Technical report}
}

@article{athey2019generalized,
  title={Generalized random forests},
  author={Athey, Susan and Tibshirani, Julie and Wager, Stefan},
  year={2019}
}

@article{bargagli2020causal,
  title={Causal rule ensemble: Interpretable discovery and inference of heterogeneous treatment effects},
  author={Bargagli-Stoffi, Falco J and Cadei, Riccardo and Lee, Kwonsang and Dominici, Francesca},
  journal={arXiv preprint arXiv:2009.09036},
  year={2020}
}

@article{alaa2017bayesian,
  title={Bayesian inference of individualized treatment effects using multi-task gaussian processes},
  author={Alaa, Ahmed M and Van Der Schaar, Mihaela},
  journal={Advances in neural information processing systems},
  volume={30},
  year={2017}
}

@article{zorzetto2024confounder,
  title={{Confounder-dependent Bayesian mixture model: Characterizing heterogeneity of causal effects in air pollution epidemiology}},
  author={Zorzetto, Dafne and Bargagli-Stoffi, Falco J and Canale, Antonio and Dominici, Francesca},
  journal={Biometrics},
  volume={80},
  number={2},
  year={2024}
}

@article{roy2018bayesian,
  title={Bayesian nonparametric generative models for causal inference with missing at random covariates},
  author={Roy, Jason and Lum, Kirsten J and Zeldow, Bret and Dworkin, Jordan D and Re III, Vincent Lo and Daniels, Michael J},
  journal={Biometrics},
  volume={74},
  number={4},
  pages={1193--1202},
  year={2018},
  publisher={Wiley Online Library}
}

@article{oganisian2021bayesian,
  title={A {B}ayesian nonparametric model for zero-inflated outcomes: Prediction, clustering, and causal estimation},
  author={Oganisian, Arman and Mitra, Nandita and Roy, Jason A},
  journal={Biometrics},
  volume={77},
  number={1},
  pages={125--135},
  year={2021},
  publisher={Wiley Online Library}
}

@article{hill2011bayesian,
  title={Bayesian nonparametric modeling for causal inference},
  author={Hill, Jennifer L},
  journal={Journal of Computational and Graphical Statistics},
  volume={20},
  number={1},
  pages={217--240},
  year={2011},
  publisher={Taylor \& Francis}
}

@article{linero2021and,
  title={The how and why of {B}ayesian nonparametric causal inference},
  author={Linero, Antonio R and Antonelli, Joseph L},
  journal={Wiley Interdisciplinary Reviews: Computational Statistics},
  volume={15},
  number={1},
  pages={e1583},
  year={2023},
  publisher={Wiley Online Library}
}

@article{maldonado2022posteriori,
  title={A Posteriori dietary patterns, insulin resistance, and diabetes risk by Hispanic/Latino heritage in the HCHS/SOL cohort},
  author={Maldonado, Luis E and Sotres-Alvarez, Daniela and Mattei, Josiemer and Daviglus, Martha L and Talavera, Gregory A and Perreira, Krista M and Van Horn, Linda and Mossavar-Rahmani, Yasmin and LeCroy, Madison N and Gallo, Linda C and others},
  journal={Nutrition \& Diabetes},
  volume={12},
  number={1},
  pages={44},
  year={2022},
  publisher={Nature Publishing Group UK London}
}

@article{lavange2010sample,
  title={{Sample design and cohort selection in the Hispanic Community Health Study/Study of Latinos}},
  author={LaVange, Lisa M  and others},
  journal={Annals of epidemiology},
  volume={20},
  number={8},
  pages={642--649},
  year={2010},
  publisher={Elsevier}
}

@article{de2022shared,
  title={Shared and ethnic background site-specific dietary patterns in the Hispanic Community Health Study/Study of Latinos (HCHS/SOL)},
  author={De Vito, Roberta and Stephenson, Briana and Sotres-Alvarez, Daniela and Siega-Riz, Anna-Maria and Mattei, Josiemer and Parpinel, Maria and Peters, Brandilyn A and Bainter, Sierra A and Daviglus, Martha L and Van Horn, Linda and others},
  journal={medRxiv},
  pages={2022--06},
  year={2022},
  publisher={Cold Spring Harbor Laboratory Press}
}

@article{thurstone1931multiple,
  title={Multiple factor analysis.},
  author={Thurstone, Louis Leon},
  journal={Psychological Review},
  volume={38},
  number={5},
  pages={406},
  year={1931},
  publisher={Psychological Review Company}
}

@article{anderson2004introduction,
  title={An introduction to multivariate statistical analysis},
  author={Anderson-Cook, Christine M},
  journal={Journal of the American Statistical Association},
  volume={99},
  number={467},
  pages={907--909},
  year={2004},
  publisher={American Statistical Association}
}

@article{Lopes2004,
  author = {Lopes, Hedibert Freitas and West, Mike},
  title = {Bayesian Model Assessment in Factor Analysis},
  journal = {Statistica Sinica},
  volume = {14},
  number = {1},
  year = {2004},
  pages = {41--67},
  publisher = {JSTOR}
}

@article{savje2024causal,
  title={Causal inference with misspecified exposure mappings: separating definitions and assumptions},
  author={S{\"a}vje, Fredrik},
  journal={Biometrika},
  volume={111},
  number={1},
  pages={1--15},
  year={2024},
  publisher={Oxford University Press}
}

@article{goplerud2025estimating,
  title={Estimating heterogeneous causal effects of high-dimensional treatments: Application to conjoint analysis},
  author={Goplerud, Max and Imai, Kosuke and Pashley, Nicole E},
  journal={The Annals of Applied Statistics},
  volume={19},
  number={2},
  pages={866--888},
  year={2025},
  publisher={Institute of Mathematical Statistics}
}

@article{jin2022bayesian,
  title={A Bayesian nonparametric approach for inferring drug combination effects on mental health in people with HIV},
  author={Jin, Wei and Ni, Yang and Rubin, Leah H and Spence, Amanda B and Xu, Yanxun},
  journal={Biometrics},
  volume={78},
  number={3},
  pages={988--1000},
  year={2022},
  publisher={Oxford University Press}
}

@article{hettinger2025multiply,
  title={Multiply robust difference-in-differences estimation of causal effect curves for continuous exposures},
  author={Hettinger, Gary and Lee, Youjin and Mitra, Nandita},
  journal={Biometrics},
  volume={81},
  number={1},
  pages={ujaf015},
  year={2025},
  publisher={Oxford University Press}
}

@article{mcjames2025bayesian,
  title={Bayesian causal forests for multivariate outcomes: application to Irish data from an international large scale education assessment},
  author={McJames, Nathan and O’Shea, Ann and Goh, Yong Chen and Parnell, Andrew},
  journal={Journal of the Royal Statistical Society Series A: Statistics in Society},
  volume={188},
  number={2},
  pages={428--450},
  year={2025},
  publisher={Oxford University Press UK}
}

@article{sun2025difference,
  title={Difference-in-Differences Under Network Interference},
  author={Sun, Kuan and Xiao, Zhiguo},
  journal={arXiv preprint arXiv:2509.24259},
  year={2025}
}

@article{negro2024sample,
  title={Sample distribution theory using coarea formula},
  author={Negro, Luigi},
  journal={Communications in Statistics-Theory and Methods},
  volume={53},
  number={5},
  pages={1864--1889},
  year={2024},
  publisher={Taylor \& Francis}
}

@article{Chipman_2010,
   title={BART: Bayesian additive regression trees},
   volume={4},
   ISSN={1932-6157},
   url={http://dx.doi.org/10.1214/09-AOAS285},
   DOI={10.1214/09-aoas285},
   number={1},
   journal={The Annals of Applied Statistics},
   publisher={Institute of Mathematical Statistics},
   author={Chipman, Hugh A. and George, Edward I. and McCulloch, Robert E.},
   year={2010},
   month=mar }

@article{de2019shared,
  title={Shared and study-specific dietary patterns and head and neck cancer risk in an international consortium},
  author={De Vito, R and Lee, Yuan Chin Amy and Parpinel, M and Serraino, D and Olshan, Andrew Fergus and Zevallos, Jose Pedro and Levi, F and Zhang, Zhuo Feng and Morgenstern, H and Garavello, W and others},
  journal={Epidemiology},
  volume={30},
  number={1},
  pages={93--102},
  year={2019},
  publisher={LWW}
}

@article{kaiser1960application,
  title={The application of electronic computers to factor analysis},
  author={Kaiser, Henry F},
  journal={Educational and Psychological Measurement},
  volume={20},
  number={1},
  pages={141--151},
  year={1960},
  publisher={Sage Publications Sage CA: Thousand Oaks, CA}
}

@article{samartsidis2019assessing,
  title={Assessing the causal effect of binary interventions from observational panel data with few treated units},
  author={Samartsidis, Pantelis and Seaman, Shaun R and Presanis, Anne M and Hickman, Matthew and De Angelis, Daniela},
  year={2019}
}

@article{samartsidis2020bayesian,
  title={A Bayesian multivariate factor analysis model for evaluating an intervention by using observational time series data on multiple outcomes},
  author={Samartsidis, Pantelis and Seaman, Shaun R and Montagna, Silvia and Charlett, Andr{\'e} and Hickman, Matthew and Angelis, Daniela De},
  journal={Journal of the Royal Statistical Society Series A: Statistics in Society},
  volume={183},
  number={4},
  pages={1437--1459},
  year={2020},
  publisher={Oxford University Press}
}

@article{zorzetto2025multivariate,
  title={Multivariate Causal Effects: a Bayesian Causal Regression Factor Model},
  author={Zorzetto, Dafne and Landy, Jenna and Zigler, Corwin and Parmigiani, Giovanni and De Vito, Roberta},
  journal={arXiv preprint arXiv:2504.03480},
  year={2025}
}

@article{varraso2012assessment,
  title={Assessment of dietary patterns in nutritional epidemiology: principal component analysis compared with confirmatory factor analysis},
  author={Varraso, Rapha{\"e}lle and Garcia-Aymerich, Judith and Monier, Florent and Le Moual, Nicole and De Batlle, Jordi and Miranda, Gemma and Pison, Christophe and Romieu, Isabelle and Kauffmann, Francine and Maccario, Jean},
  journal={The American journal of clinical nutrition},
  volume={96},
  number={5},
  pages={1079--1092},
  year={2012},
  publisher={Elsevier}
}

@article{huang2024sparse,
  title={Sparse Bayesian Factor Models with Mass-Nonlocal Factor Scores},
  author={Huang, Yingjie and Zorzetto, Dafne and De Vito, Roberta},
  journal={arXiv preprint arXiv:2412.00304},
  year={2024}
}

@article{bhattacharya2011,
  title={Sparse {B}ayesian infinite factor models},
  author={Bhattacharya, Anirban and Dunson, David B},
  journal={Biometrika},
  volume={98},
  number={2},
  pages={291},
  year={2011}
}

@article{rovckova2016,
  title={Fast {B}ayesian factor analysis via automatic rotations to sparsity},
  author={Ro{\v{c}}kov{\'a}, Veronika and George, Edward I},
  journal={Journal of the American Statistical Association},
  volume={111},
  number={516},
  pages={1608--1622},
  year={2016},
  publisher={Taylor \& Francis}
}

@article{Nanri2013,
  author  = {Nanri, A. and others},
  title   = {Dietary patterns and glucose tolerance abnormalities in Japanese men and women: the Japan Epidemiology Collaboration on Occupational Health Study},
  journal = {Diabetes Care},
  year    = {2013},
  volume  = {36},
  number  = {8},
  pages   = {2411--2419}
}

@article{Aune2018,
  author  = {Aune, D. and others},
  title   = {Whole grain consumption and risk of cardiovascular disease, cancer, and all cause and cause specific mortality: systematic review and dose-response meta-analysis of prospective studies},
  journal = {BMJ},
  year    = {2018},
  volume  = {361},
  pages   = {k2235}
}

@article{Zhao2020,
  author  = {Zhao, L. and others},
  title   = {Dietary patterns and chronic disease risk among Chinese adults: a prospective cohort study},
  journal = {Nutrients},
  year    = {2020},
  volume  = {12},
  number  = {10},
  pages   = {2985}
}

@article{Pittas2010,
  author  = {Pittas, A. G. and others},
  title   = {The role of vitamin D and calcium in type 2 diabetes. A systematic review and meta-analysis},
  journal = {Diabetes Care},
  year    = {2010},
  volume  = {33},
  number  = {6},
  pages   = {1379--1385}
}

@article{Song2013,
  author  = {Song, Y. and others},
  title   = {Dietary magnesium intake and risk of type 2 diabetes: meta-analysis of prospective cohort studies},
  journal = {American Journal of Clinical Nutrition},
  year    = {2013},
  volume  = {98},
  number  = {3},
  pages   = {774--788}
}

@article{Guo2020,
  author  = {Guo, W. and others},
  title   = {Dietary calcium intake and risk of type 2 diabetes: systematic review and dose–response meta-analysis},
  journal = {Nutrients},
  year    = {2020},
  volume  = {12},
  number  = {9},
  pages   = {2651}
}

@article{Mozaffarian2006,
  author  = {Mozaffarian, D. and others},
  title   = {Trans fatty acids and cardiovascular disease},
  journal = {New England Journal of Medicine},
  year    = {2006},
  volume  = {354},
  number  = {15},
  pages   = {1601--1613}
}

@article{Wang2021,
  author  = {Wang, D. D. and others},
  title   = {Association of long-term consumption of plant-based dietary patterns with cardiovascular disease risk},
  journal = {JAMA Cardiology},
  year    = {2021},
  volume  = {6},
  number  = {1},
  pages   = {98--108}
}

@article{McKeown2018,
  author  = {McKeown, N. M. and others},
  title   = {Whole-grain intake and insulin sensitivity: the Framingham Offspring Study},
  journal = {American Journal of Clinical Nutrition},
  year    = {2018},
  volume  = {107},
  number  = {1},
  pages   = {71--79}
}

@article{Kim2021,
  author  = {Kim, Y. and others},
  title   = {Whole grain consumption and risk of type 2 diabetes: a meta-analysis of prospective cohort studies},
  journal = {Nutrients},
  year    = {2021},
  volume  = {13},
  number  = {5},
  pages   = {1544}
}

@article{Weigle2005,
  author  = {Weigle, D. S. and others},
  title   = {A high-protein diet induces sustained reductions in appetite, ad libitum caloric intake, and body weight despite compensatory changes in diurnal plasma leptin and ghrelin concentrations},
  journal = {American Journal of Clinical Nutrition},
  year    = {2005},
  volume  = {82},
  number  = {1},
  pages   = {41--48}
}

@article{VanNielen2014,
  author  = {Van Nielen, M. and others},
  title   = {Dietary protein intake and incidence of type 2 diabetes in Europe: the EPIC-InterAct Case–Cohort Study},
  journal = {Diabetes Care},
  year    = {2014},
  volume  = {37},
  number  = {11},
  pages   = {3123--3130}
}

@article{Rietman2014,
  author  = {Rietman, A. and others},
  title   = {Protein intake, energy expenditure, and body composition in older adults},
  journal = {Journal of Nutrition},
  year    = {2014},
  volume  = {144},
  number  = {11},
  pages   = {1753--1763}
}

@article{Virtanen2014,
  author  = {Virtanen, J. K. and others},
  title   = {Serum omega-3 fatty acids and incident type 2 diabetes in men: the Kuopio Ischaemic Heart Disease Risk Factor Study},
  journal = {Diabetes Care},
  year    = {2014},
  volume  = {37},
  number  = {1},
  pages   = {189--196}
}

@article{Lankinen2018,
  author  = {Lankinen, M. and others},
  title   = {Dietary patterns and their associations with plasma lipid metabolites in the METSIM study},
  journal = {Nutrients},
  year    = {2018},
  volume  = {10},
  number  = {5},
  pages   = {611}
}

@article{Belalcazar2021,
  author  = {Belalcazar, L. M. and others},
  title   = {Lifestyle intervention for Hispanic/Latino adults with prediabetes: the HCHS/SOL Sociocultural Ancillary Study},
  journal = {Diabetes Care},
  year    = {2021},
  volume  = {44},
  number  = {2},
  pages   = {446--454}
}
\end{document}